\documentclass[11pt]{article}
\usepackage[]{acl}
\usepackage{times}
\usepackage{latexsym}
\usepackage{amsmath}
\usepackage{amssymb}
\usepackage{booktabs}
\usepackage{graphicx}
\usepackage{multirow}
\usepackage{xcolor}
\usepackage{subcaption}
\usepackage{float}
\usepackage{amsthm}
\usepackage{hyperref}
\newtheorem{theorem}{Theorem}
\newtheorem{proposition}{Proposition}

\hypersetup{
  pdftitle={RegimeAbstain: Retrieval Confidence Scoring and Abstention for Multi-Hop QA},
  pdfauthor={Andre Bacellar},
  pdfkeywords={retrieval abstention, retrieval confidence, CWAR, selective prediction,
               retrieval-augmented generation, uncertainty quantification},
}

\newcommand{\system}{\textsc{RegimeAbstain}}
\newcommand{\router}{\textsc{RegimeRouter}}
\newcommand{\bridgerag}{\textsc{BridgeRAG}}

\newcommand{\rcs}{\mathrm{RCS}}
\newcommand{\abstain}{\bot}
\newcommand{\thresh}{\tau}

\title{Predictable Failure in Multi-Hop Retrieval:\\
       Score-Distributional Confidence Scoring and Abstention}

\author{Andre Bacellar \\
  \texttt{andremi@gmail.com}}

\begin{document}
\maketitle

\begin{abstract}
Multi-hop retrieval failures are not uniformly distributed across queries:
they cluster in structurally predictable subpopulations.
We prove two results formalizing this structure.
First (\emph{CWAR Reducibility}, Theorem~\ref{thm:reducibility}): confident-failure
reduction is achievable if and only if retrieval features carry mutual information
about success — a condition satisfied by LLM-judge pipelines but substantially
weaker in dense-only settings, explaining the AUC-AC gap between regimes.
Second (\emph{Feature Regime Complementarity}, Proposition~\ref{thm:complementarity}):
no single ANN score feature achieves best predictive performance across all failure
regimes; the dominant feature differs between datasets (query length on MuSiQue,
hop-1 concentration on HoVer), and a constructive witness pair shows each is
necessary in one regime and non-contributory in the other.
We instantiate these principles in \system{}, which computes a
\emph{Retrieval Confidence Score} ($\rcs$) — a logistic function of up to nine
query-ANN structural features, all available without any additional LLM call —
and uses it to implement a calibrated abstention policy.
We define the \emph{Confident-Wrong-Answer Rate} (CWAR) metric and evaluate across
three multi-hop benchmarks (MuSiQue, 2WikiMultiHopQA, HoVer) and two retrieval
architectures (LLM-judge and dense-only), covering five failure regimes with CWAR
from 14.5\% to 62.1\%.
$\rcs$ achieves best or co-best AUC-AC in all five conditions against eight
confidence baselines.
On MuSiQue (LLM-judge), $\rcs$ reduces CWAR from 39.5\% to 20.6\% at 50\%
coverage (47.8\% relative reduction), with ECE\,$=$\,0.035.
A model trained on MuSiQue transfers to 2WikiMultiHopQA with only $-0.5$pp AUC
loss, confirming the domain-agnostic structure of regime features.
\end{abstract}

\section{Introduction}
\label{sec:intro}

Retrieval-augmented generation (RAG) pipelines always produce an answer: given a
query, the system returns its best-matching top-$k$ passages and feeds them to
a language model.
This design is sensible when the goal is to maximize recall across a broad query
distribution.
It is inappropriate when retrieval quality is heterogeneous and high-confidence
errors are costly.
Multi-hop QA is precisely such a setting: a query requires retrieving both a bridge
passage and an answer passage, and the system frequently succeeds on one while
missing the other.
On MuSiQue with our LLM-judge pipeline \citep{bacellar2026bridgerag},
39.5\% of test queries fail to retrieve \emph{all} gold passages into the top-5.
These are not low-confidence retrievals—the system returns a ranked list regardless,
with no signal to downstream components that the evidence is incomplete.

We argue that a retrieval system operating on heterogeneous queries should support
a third action: \emph{abstention} ($\abstain$).
When the system predicts low retrieval quality, it can flag the query rather than
silently returning an incomplete context.
Abstracting this into a calibrated score opens the door to query routing,
graceful degradation, and human-in-the-loop escalation—all without modifying
the underlying retrieval pipeline.

We study what makes a query hard at retrieval time, without using gold labels.
The insight comes from the regime-conditional framework of
\citet{bacellar2026regime}: queries that sit near the transition between
Q-dominant and B-dominant retrieval regimes are structurally ambiguous.
Lightweight features from the ANN score distribution prove particularly predictive:
\emph{hop-1 lift} (how strongly the top ANN passage dominates the rest),
\emph{hop-2 max score} (quality of the expanded query search), and
\emph{query length} (longer questions correlate with harder multi-hop chains).

\paragraph{Contributions.}
\begin{enumerate}
  \item We prove that CWAR is reducible if and only if retrieval features carry
        mutual information about retrieval success, formally characterizing when
        abstention can help (Theorem~\ref{thm:reducibility}).
  \item We show empirically that no single ANN score feature achieves best
        predictive performance across all failure regimes
        (Proposition~\ref{thm:complementarity}): a constructive witness pair
        exhibits opposite necessity/redundancy patterns across MuSiQue and HoVer,
        motivating multi-feature aggregation.
  \item We define the \emph{Confident-Wrong-Answer Rate} (CWAR) metric and
        introduce $\rcs$, a logistic function of nine query-time features
        that is calibrated (ECE $= 0.035$) and requires no additional LLM inference.
  \item We evaluate $\rcs$ against eight confidence baselines across five failure
        regimes spanning three datasets and two retrieval architectures,
        establishing consistent best or co-best AUC-AC in all conditions.
  \item We show cross-dataset transfer ($-0.5$pp AUC) and identify regime-specific
        dominant features: query length on MuSiQue, hop-1 concentration on HoVer.
\end{enumerate}

\section{Background and Related Work}
\label{sec:related}

\paragraph{Selective prediction.}
The study of abstaining classifiers dates to \citet{chow1970optimum}.
\citet{geifman2017selective} formalize the accuracy-coverage trade-off: at
coverage $c$, the system answers only the fraction $c$ of queries it is most
confident about, and accuracy among answered queries should increase as $c$
decreases.
\citet{kamath2020selective} apply this framework to open-domain QA, training a
confidence predictor on top of a QA model's internal states.
Our setup differs: we predict \emph{retrieval} success (do the retrieved passages
contain the gold answers?), not answer generation accuracy, using only pre-LLM
features.

\paragraph{LLM uncertainty.}
Post-generation uncertainty is well-studied: \citet{kadavath2022language}
show that large LLMs can self-predict answer accuracy; \citet{kuhn2023semantic}
propose semantic entropy over sampled outputs.
These require a full generation pass per query.
$\rcs$ is computed in $<1$ms from ANN scores available at retrieval time---
no LLM call is needed for the abstention decision.

\paragraph{Conformal prediction for NLP.}
Conformal prediction \citep{angelopoulos2022conformal} provides coverage
guarantees under exchangeability.
We include a split-conformal baseline (Section~\ref{sec:setup}) and show that
$\rcs$ outperforms it on the AUC-AC metric while offering a continuous score
for operating-point selection.

\paragraph{Score-distributional signals for retrieval abstention.}
Concurrent work by \citet{holdcroft2026magnitude} shows, on single-hop
semantic, logical, and temporal benchmarks, that thresholding raw similarity
magnitude is a poor abstention signal and that zero-cost distributional
statistics of the retrieved score list (score gap, magnitude-and-variance)
improve abstention AUROC by up to 0.16.
We reach the same conclusion from a different direction and extend it in four
ways: the setting is multi-hop, where failure concentrates in structurally
identifiable regimes; the signals are combined by a learned model whose output
is a calibrated probability (ECE $=0.035$) rather than used as raw rankers; the
analysis covers an LLM-judge pipeline as well as dense retrieval; and
Theorem~\ref{thm:reducibility} gives the condition under which any such signal
can reduce confident failure at all.

\paragraph{Regime-conditional retrieval.}
\citet{bacellar2026regime} show that multi-hop queries divide into two regimes:
Q-dominant (bridge entity contained in the question) and B-dominant (bridge
entity must be discovered from the hop-1 result).
\system{} uses related features for a different task: predicting retrieval
success rather than choosing a pipeline, providing a complementary abstention
gate for the same regime-aware stack.

\section{Problem Formulation}
\label{sec:problem}

\subsection{Confident-Wrong-Answer Rate (CWAR)}

Let $\mathcal{Q}$ be a query set, $R: q \to \{p_1, \ldots, p_k\}$ a retrieval
system, and $\mathcal{G}(q)$ the set of gold passages for query $q$.
Define full success:
\[
  y(q) = \mathbf{1}\!\left[\mathcal{G}(q) \subseteq \{p_1,\ldots,p_k\}\right]
\]
i.e., 1 if all gold passages appear in the top-$k$ result, 0 otherwise.%
\footnote{For MuSiQue and 2Wiki each query has two supporting passages.
Full success therefore requires both to be in the top-5.
HoVer queries have 2--4 supporting passages.}

Given a confidence score $s(q) \in [0,1]$ and threshold $\thresh$, define the
\emph{confident set} $\mathcal{C}_\thresh = \{q : s(q) \ge \thresh\}$.
The CWAR at threshold $\thresh$ is:
\begin{align*}
  \mathrm{CWAR}(\thresh)
  &= \frac{|\{q \in \mathcal{C}_\thresh : y(q) = 0\}|}{|\mathcal{C}_\thresh|}\\
  &= 1 - \Pr\!\left[y(q) = 1 \mid s(q) \ge \thresh\right].
\end{align*}
When $\thresh = 0$ (answer every query), $\mathrm{CWAR}(0) = 1 - \bar{y}$ equals
the base failure rate.
As $\thresh$ increases, coverage $\mathrm{Cov}(\thresh) = |\mathcal{C}_\thresh|/|\mathcal{Q}|$
decreases; a good confidence score makes $\mathrm{CWAR}(\thresh)$ decrease
faster than $\mathrm{Cov}(\thresh)$.

The \emph{accuracy-coverage curve} plots $\Pr[y=1|s\ge\thresh]$ against
$\mathrm{Cov}(\thresh)$ as $\thresh$ varies, and its area (AUC-AC) summarises
performance across all operating points.

\subsection{Retrieval Confidence Score (RCS)}

We model $s(q) = \sigma(\mathbf{w}^\top \phi(q) + b)$ where $\sigma$ is the
logistic function and $\phi(q) \in \mathbb{R}^d$ is a feature vector computed
from ANN scores available after retrieval.
$d = 9$ for pipelines with hop-2 expansion; $d = 6$ for hop-1-only pipelines
(Section~\ref{sec:method}).

\subsection{Calibration}

$\rcs$ is \emph{calibrated} if $\mathbb{E}[y(q) \mid \rcs(q) = p] = p$ for all $p$.
We measure calibration by Expected Calibration Error (ECE) with equal-frequency
bins:
\[
  \mathrm{ECE} = \sum_{b=1}^{B} \frac{|B_b|}{n}
    \left|\bar{y}_{B_b} - \bar{s}_{B_b}\right|
\]
where $B_b$ is the $b$-th bin.

\section{Theoretical Properties of CWAR}
\label{sec:theory}

\subsection{CWAR Reducibility}

The central deployment question is: \emph{when} can confident-failure reduction
be achieved at all?
We show it is governed by the mutual information between retrieval features and
retrieval success.

\begin{theorem}[CWAR Reducibility]
\label{thm:reducibility}
Let $y(q) \in \{0,1\}$ be the full-success indicator and $\phi(q)$ the feature
vector.
Define the Bayes-optimal score $s^*(q) = P[y=1 \mid \phi(q)]$.
Then $\mathrm{AUC\text{-}AC}(s^*) > \mathrm{AUC\text{-}AC}_{\mathrm{random}}$
if and only if $I(y;\phi(q)) > 0$.
Moreover, $\mathrm{AUC\text{-}AC}(s^*) \ge \mathrm{AUC\text{-}AC}(s)$ for any
score $s$ based on $\phi$.
\end{theorem}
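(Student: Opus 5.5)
The plan is to work at the population level, where the statement is a clean consequence of a rearrangement (``bathtub'') inequality. First I would fix the population version of the quantities in Section~\ref{sec:problem}. For a score $s$ and coverage $c\in(0,1]$, let $S_s(c)$ be a set of queries of probability mass exactly $c$ consisting of the highest-$s$ queries. Ties at the boundary quantile are broken by independent uniform randomization, so that the mass is exactly $c$. Define the selective accuracy
\[
A_s(c)=\Pr[y=1\mid q\in S_s(c)] = \tfrac{1}{c}\,\mathbb{E}\bigl[y\,\mathbf{1}[q\in S_s(c)]\bigr],
\]
and $\mathrm{AUC\text{-}AC}(s)=\int_0^1 A_s(c)\,dc$. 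A random score is independent of $y$, so $A_{\mathrm{random}}(c)=\bar y$ for every $c$, and $\mathrm{AUC\text{-}AC}_{\mathrm{random}}=\bar y$. The key identity is that for any $s$ that is a (possibly randomized) function of $\phi$, the tower property gives $\mathbb{E}[y\,\mathbf{1}[q\in S_s(c)]]=\mathbb{E}[s^*(q)\,\mathbf{1}[q\in S_s(c)]]$. This holds because membership in $S_s(c)$ depends only on $\phi$ and on randomness independent of $y$ given $\phi$.

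For the optimality claim, I would prove that for each fixed $c$, among all selection rules of mass $c$ measurable with respect to $\phi$ (plus independent randomization), the quantity $\mathbb{E}[s^*\,\mathbf{1}_S]$ is maximized by taking the top-$c$ quantile of $s^*$ itself. This is the standard Neyman--Pearson/bathtub argument. Write $t_c$ for the $c$-quantile threshold of $s^*$. Then
\[
\mathbb{E}\bigl[(s^*-t_c)(\mathbf{1}_{S_{s^*}(c)}-\mathbf{1}_S)\bigr]\ge 0,
\]
because the integrand is pointwise nonnegative. Both sets have mass $c$, so the $t_c$ terms cancel. Hence $A_{s^*}(c)\ge A_s(c)$ for every $c$, and integrating over $c$ yields $\mathrm{AUC\text{-}AC}(s^*)\ge\mathrm{AUC\text{-}AC}(s)$.

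For the equivalence, recall that $I(y;\phi)=0$ iff $y\perp\phi$, which holds iff $s^*(q)=\bar y$ almost surely. In that case $A_{s^*}(c)=\bar y$ for all $c$, so $\mathrm{AUC\text{-}AC}(s^*)=\bar y$, which equals the random baseline. Conversely, suppose $I>0$, so $s^*$ is non-degenerate. Then I would show that $A_{s^*}(c)$, being the mean of $s^*$ over its upper-$c$ tail, is nonincreasing in $c$ and equals $\bar y$ at $c=1$. It must also be strictly larger than $\bar y$ for all $c$ in some interval $(0,c_0)$. Take $c_0=\Pr[s^*>m]$ for some $m$ strictly between the essential infimum and the mean of $s^*$, so that the upper tail has mean strictly above $\bar y$. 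Then $\int_0^1 A_{s^*}(c)\,dc>\bar y$.

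The main obstacle is technical rather than conceptual: handling atoms and ties in the distribution of $s^*$ (and of $s$). Without care, the top-$c$ set is not well defined, and the coverage curve in Section~\ref{sec:problem}, defined via thresholds $\tau$, only visits finitely many or jump-separated coverage levels. I would resolve this by adopting randomized tie-breaking at the boundary throughout, which makes every coverage level attainable and turns the $\tau$-parameterized curve into the $c$-parameterized one above. I would also note explicitly that the theorem is a population statement. The empirical AUC-AC of a fitted $\rcs$ approximates $\mathrm{AUC\text{-}AC}(s^*)$ only insofar as the logistic model approximates $s^*$.
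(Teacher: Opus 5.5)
Your proposal is correct and follows the same three-step structure as the paper's proof: $s^*$ is constant when $I(y;\phi)=0$, an upper-tail argument shows selective accuracy exceeds $\bar y$ when $s^*$ is non-degenerate, and a Neyman--Pearson (bathtub) comparison gives optimality, which your tower-property identity makes rigorous for any $\phi$-measurable score. Your version is tighter than the paper's in two places. First, monotonicity of the tail mean $A_{s^*}(c)$ supplies a step the paper skips: the curve never falls below $\bar y$, so being strictly above it on $(0,c_0)$ forces the area to exceed $\bar y$. Second, choosing $m$ below the mean avoids the paper's problem that thresholding at an arbitrary $\theta\in(\bar y,1)$ can select an empty set.
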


\begin{proof}
($\Rightarrow$, contrapositive)
If $I(y;\phi) = 0$ then $y \perp \phi$, so $s^*(q) = \bar{y}$ for all $q$.
Every threshold produces the same coverage and accuracy $\bar{y}$; the
accuracy-coverage curve is flat and AUC-AC$(s^*) = \bar{y}$ $=$ AUC-AC$_{\text{random}}$.

($\Leftarrow$)
If $I(y;\phi) > 0$ then $\mathrm{Var}(s^*) > 0$.
By the law of iterated expectation, $E[s^*] = \bar{y}$, so there exists a
positive-measure set of queries with $s^*(q) > \bar{y}$.
Thresholding at any $\theta \in (\bar{y}, 1)$ concentrates the confident set on
above-average queries: $P[y=1 \mid s^* \ge \theta] > \bar{y}$.
The accuracy-coverage curve is strictly above $\bar{y}$ for some coverage, so
AUC-AC$(s^*) > \bar{y}$ = AUC-AC$_{\text{random}}$.

The optimality of $s^*$ follows from the Neyman-Pearson lemma: any monotone
threshold rule achieves maximum accuracy at a given coverage when ranking by
$P[y=1\mid\phi]$, so no other function of $\phi$ can produce a higher
accuracy-coverage curve.
\end{proof}

\paragraph{Empirical consequence.}
The theorem predicts the AUC-AC ordering across the five conditions.
Dense-pipeline failures are nearly uniformly distributed across feature space
(MuSiQue Dense: AUC-AC\,$=$\,0.556 vs.\ random\,$=$\,0.366, small gap);
the LLM judge concentrates failures in structurally complex queries, increasing
$I(y;\phi)$ and making failures more predictable
(MuSiQue PropH: AUC-AC\,$=$\,0.790 vs.\ random\,$=$\,0.576, large gap).
High CWAR alone does not guarantee reducibility: the dense conditions have the
highest base CWAR (62.1\%, 58.3\%) yet the weakest AUC-AC.

\subsection{Feature Regime Complementarity}

The feature importance divergence in Table~\ref{tab:ablation} reflects a
structural property of multi-hop failure, not a dataset quirk.

\begin{proposition}[Feature Regime Complementarity]
\label{thm:complementarity}
No single feature in $\{\phi_1,\ldots,\phi_9\}$ achieves best AUC-AC across all
five failure regimes.
Furthermore, there exist features $\phi_i \neq \phi_j$ such that $\phi_i$ is
necessary in regime $R_a$ and non-contributory in $R_b$, while $\phi_j$ exhibits
the opposite pattern --- the dominant predictive feature is regime-specific.
\end{proposition}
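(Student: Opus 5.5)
This is an empirical existence claim, so I will prove it by exhibiting finite witnesses in the per-feature and leave-one-out results of Table~\ref{tab:ablation}. There are two parts, and I will treat them in order.

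\textbf{First part.} For each of the nine features $\phi_k$ I take its single-feature AUC-AC in each of the five conditions (MuSiQue PropH, 2Wiki PropH, HoVer PropH, MuSiQue Dense, 2Wiki Dense). The claim is a universal negation over $\phi_1,\ldots,\phi_9$. It therefore suffices to find two conditions whose per-condition $\arg\max$ features differ. The abstract already names them: query length dominates on MuSiQue, and hop-1 concentration (lift) dominates on HoVer. So I will show
\[
  \arg\max_k \mathrm{AUC\text{-}AC}_{\text{MuSiQue}}(\phi_k) \neq \arg\max_k \mathrm{AUC\text{-}AC}_{\text{HoVer}}(\phi_k).
\]
Since the argmax sets are disjoint, no single feature is best in both conditions, and hence none is best in all five. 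One caveat: ties inside the bootstrap confidence interval would blur the meaning of ``best''. I will fix the convention that ``best'' means the strict maximum of the point estimate. If that leaves a near-tie, I will also check that the gap exceeds the bootstrap standard error.

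\textbf{Second part.} I first need to operationalize the two terms. I will say $\phi_i$ is \emph{necessary} in regime $R$ if removing it from the full $\rcs$ model drops AUC-AC by more than a tolerance $\epsilon$. I will say it is \emph{non-contributory} if the drop is at most $\epsilon$ in absolute value, i.e.\ within noise. The witness pair is $\phi_i=$ query length and $\phi_j=$ hop-1 lift, with $R_a=$ MuSiQue and $R_b=$ HoVer. I then need four inequalities from the ablation table:
\begin{itemize}
  \item $\Delta_{\text{MuSiQue}}(\phi_i)>\epsilon$;
  \item $|\Delta_{\text{HoVer}}(\phi_i)|\le\epsilon$;
  \item $\Delta_{\text{HoVer}}(\phi_j)>\epsilon$;
  \item $|\Delta_{\text{MuSiQue}}(\phi_j)|\le\epsilon$.
\end{itemize}

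A theoretical remark helps explain why such a pair can exist, though the proof does not depend on it. Theorem~\ref{thm:reducibility} says a feature adds AUC-AC only if it contributes conditional mutual information with $y$ beyond the remaining features. So necessity in one regime and redundancy in the other corresponds to $I(y;\phi_i\mid\phi_{-i})$ being positive in one regime and near zero in the other. Mechanistically, on MuSiQue question length proxies the number of composed hops. On HoVer, claim length varies less informatively, and failure is driven by whether hop-1 lands on a distinctive entity.

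\textbf{Main obstacle.} The hard step is choosing $\epsilon$ so the four inequalities hold without looking tuned after the fact. I will first try setting $\epsilon$ to the bootstrap standard deviation of AUC-AC on each test set, which should be roughly $0.01$. I will then verify the signs of the four deltas against the bootstrap intervals. If the MuSiQue-side redundancy of lift fails at that tolerance, my fallback is to weaken ``non-contributory'' to ``not in the top-3 ablation drops''. I would state that weaker convention explicitly.
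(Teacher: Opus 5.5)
Your skeleton matches the paper's: Table~\ref{tab:main} for the first claim, and a leave-one-out witness pair from Table~\ref{tab:ablation} with \textsc{query-len} as $\phi_i$ for the second. But you misread which feature plays which role. In the first part you conflate ablation dominance, which is what the abstract means by ``dominant feature'', with standalone AUC-AC. The standalone numbers contradict both of the argmaxes you name. On MuSiQue PropH, Lift ($0.776$) beats Query-len-inv ($0.751$). On HoVer, which is a dense condition and not PropH, Max-score ($0.839$) beats Lift ($0.825$). Your two-condition argmax argument still works, since \textsc{hop1-lift} $\neq$ \textsc{hop1-max}, and it is a slightly leaner form of the paper's exhaustion over baselines. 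However, Table~\ref{tab:main} has standalone scores for only five of the nine features, with none for \textsc{hop1-top3} or the hop-2 features. So, as in the paper, the claim is checked only over those five.

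In the second part, ``hop-1 concentration'' means \textsc{hop1-top3}, not \textsc{hop1-lift}. The paper's witness pairs \textsc{query-len} ($-0.012$ on MuSiQue, $-0.002$ on HoVer) with \textsc{hop1-top3} ($-0.026$ on HoVer, the largest drop there, versus $-0.001$ on MuSiQue). Your lift pair does pass your four inequalities at $\epsilon=0.01$, since lift drops $0.015$ on HoVer and $0.003$ on MuSiQue. It works only for $\epsilon\in[0.003,0.012)$, though, and the paper reports no spread for the ablation deltas, so $\epsilon\approx 0.01$ is an unsupported guess. Lift is also the wrong feature to call non-contributory on MuSiQue: it has the second-largest drop there and is the strongest standalone signal there. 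The paper also names it among the features RCS draws its MuSiQue signal from. Your fallback convention (``not among the top-3 drops'') fails for it outright, because it ranks second. With \textsc{hop1-top3} as $\phi_j$, the obstacle you flag disappears. The binding lower constraint becomes \textsc{query-len}'s $0.002$ on HoVer, which is the same ``within noise'' judgment the paper makes. The upper constraint $0.012$ comes from \textsc{query-len} on MuSiQue for either pair, so that is where the real sensitivity to $\epsilon$ lies. Finally, your aside on Theorem~\ref{thm:reducibility} overstates it. The theorem concerns $I(y;\phi)$ for the full feature vector and the Bayes score, not the conditional contribution of one coordinate to a fitted logistic model. Nothing in your proof rests on it, though.
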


\begin{proof}
\emph{First claim.}
Table~\ref{tab:main} establishes this by exhaustion: no single baseline achieves
best AUC-AC in more than two of the five conditions.
Query-len-inv achieves 0.947 on 2Wiki PropH but only 0.426 on MuSiQue Dense;
Max-score achieves 0.839 on HoVer Dense but only 0.480 on 2Wiki Dense.

\emph{Second claim (constructive).}
Leave-one-out ablation in Table~\ref{tab:ablation} gives:
\textsc{query-len} is necessary on MuSiQue PropH ($\Delta$AUC\,$=$\,$-0.012$)
and non-contributory on HoVer Dense ($\Delta$AUC\,$=$\,$-0.002$, within noise);
\textsc{hop1-top3} is necessary on HoVer Dense ($\Delta$AUC\,$=$\,$-0.026$)
and non-contributory on MuSiQue PropH ($\Delta$AUC\,$=$\,$-0.001$).
This pair forms the constructive witness.
Note that entropy features (\textsc{hop1-H}, \textsc{hop2-H}) are non-contributory
in both conditions ($\Delta$AUC\,$\ge 0$), consistent with the Entropy baseline's
weak performance across regimes; the proposition does not claim universality
for all nine features, only that the complementary structure exists. \qed
\end{proof}

\paragraph{Consequence for confidence model design.}
Proposition~\ref{thm:complementarity} provides a principled motivation for
multi-feature aggregation: any confidence model that uses only a single feature
incurs avoidable AUC-AC loss on whichever regime that feature does not dominate.
A logistic model trained jointly across regimes learns the complementary weighting
— \textsc{query-len} dominates on MuSiQue while \textsc{hop1-top3} dominates on
HoVer — achieving regime-universal coverage that no single baseline can match.

\section{\system{}}
\label{sec:method}

\subsection{Feature Extraction}

All features are computed from ANN similarity scores available after retrieval---
no additional model inference is required.

Let $\{(u_i, s_i^{(1)})\}_{i=1}^{K}$ be the top-$K$ hop-1 passage scores
sorted in decreasing order, and $\{(u_j, s_j^{(2)})\}_{j}$ the hop-2 scores
obtained by embedding $N=3$ SVO-extracted bridge queries and taking the maximum
score per passage.
When hop-2 scores are unavailable (hop-1-only architectures), $\phi_6, \phi_7,
\phi_8$ are dropped, giving $d=6$.

\begin{itemize}
  \item $\phi_1$ (\textsc{hop1-max}): $s_1^{(1)}$, the maximum hop-1 score.
  \item $\phi_2$ (\textsc{hop1-margin}): $s_1^{(1)} - s_2^{(1)}$, the gap
        between the top-2 hop-1 passages.
  \item $\phi_3$ (\textsc{hop1-top3}): mean of the top-3 hop-1 scores.
  \item $\phi_4$ (\textsc{hop1-H}): normalised Shannon entropy of
        $\{s_i^{(1)}\}$.
  \item $\phi_5$ (\textsc{hop1-lift}): $s_1^{(1)} / \overline{s^{(1)}}_{50}$,
        peak prominence relative to the top-50 mean.
  \item $\phi_6$ (\textsc{hop2-max}): $\max_j s_j^{(2)}$.
  \item $\phi_7$ (\textsc{hop2-margin}): gap between the top-2 hop-2 scores.
  \item $\phi_8$ (\textsc{hop2-H}): normalised entropy of $\{s_j^{(2)}\}$.
  \item $\phi_9$ (\textsc{query-len}): word count of the question.
\end{itemize}

\subsection{Training}

Given a held-out validation set $\mathcal{V}$ with gold labels $y(q)$, we fit
$\mathbf{w}, b$ by minimising binary cross-entropy with gradient descent until
convergence (tolerance $10^{-7}$).
Features are $z$-score normalised.
Predictions on the test set use $s(q) = \sigma(\mathbf{w}^\top \hat{\phi}(q) + b)$
where $\hat{\phi}$ uses validation-set normalisation statistics.

\subsection{Threshold Selection}

Given a user-specified CWAR target $\gamma \in (0,1)$, we sweep $\thresh$ over
the validation set and select
$\thresh^* = \min\{\thresh : \mathrm{CWAR}(\thresh) \le \gamma\}$.
Because $\mathrm{CWAR}(\thresh)$ is monotone non-increasing in $\thresh$ and
equals $1 - \bar{y}$ at $\thresh = 0$ and 0 at $\thresh = 1$ (empty confident
set), the desired $\thresh^*$ always exists for any $\gamma \ge 0$.

\section{Experimental Setup}
\label{sec:setup}

\paragraph{Datasets.}
We evaluate on three multi-hop benchmarks.
\textbf{MuSiQue} \citep{trivedi2022musique} contains compositional 2--4-hop
questions over Wikipedia; we use 1000 queries (486 tune, 514 test).
\textbf{2WikiMultiHopQA} \citep{ho2020constructing} contains bridge and
comparison questions; we use 1000 queries (509 tune, 491 test).
\textbf{HoVer} \citep{shi2020hover} contains 4000 multi-hop fact-verification
claims over Wikipedia;
we use the 2000 SUPPORTED claims (1007 tune, 993 test).
All splits use a deterministic MD5-hash 50/50 partition on query id.

\paragraph{Retrieval pipelines.}
We evaluate on two architectures:
\begin{itemize}
  \item \textbf{Proposal~H (LLM-judge)} \citep{bacellar2026bridgerag}:
        NV-Embed-v2 hop-1 ANN, $N=3$ SVO-expanded hop-2 queries, 20-candidate
        pool, 3-way LLM judge, $\alpha=0.10$.
        Achieves R@5 $= 0.8138$ on MuSiQue and $0.9527$ on 2Wiki.
        Full-success rate: MuSiQue 60.5\%, 2Wiki 85.5\%.
        Uses 9-feature RCS ($d=9$).
  \item \textbf{Dense-only}: hop-1 top-5 directly (no LLM judge, no SVO
        re-ranking).
        Applied to all three datasets; HoVer uses $d=6$ (no hop-2 scores
        available).
        Full-success rates: MuSiQue 37.9\%, 2Wiki 41.7\%, HoVer 68.3\%.
\end{itemize}

\paragraph{Baselines.}
We compare eight confidence signals against $\rcs$:
\begin{enumerate}
  \item \textbf{Random}: uniform $\mathrm{Uniform}[0,1]$.
  \item \textbf{Entropy}: $1 - H(\{s_i^{(1)}\})$.
  \item \textbf{Max-score}: $s_1^{(1)}$.
  \item \textbf{Margin}: $s_1^{(1)} - s_2^{(1)}$.
  \item \textbf{Lift}: $s_1^{(1)} / \overline{s^{(1)}}_{50}$ (single-feature).
  \item \textbf{Query-len-inv}: $1 / (1 + |q|/20)$ (inverse length).
  \item \textbf{Temp-scaled}: $s_1^{(1)\,1/T}$ with $T \in \{0.5,1,2,5,10\}$
        selected on the tune split.
  \item \textbf{MLP}: two-layer network (32-32 hidden units) trained on the
        same 9 (or 6) features.
\end{enumerate}
We additionally evaluate a split-conformal baseline using hop-1 max-score
as non-conformity; results at discrete coverage targets are reported
separately since AUC-AC is not directly applicable.

\paragraph{Metrics.}
Primary: AUC of the accuracy-coverage curve (AUC-AC) on the test split,
with 95\% bootstrap CIs (2000 samples).
Secondary: CWAR and accuracy at fixed coverage targets (70\%, 50\%);
ECE (10 equal-frequency bins) and Brier score for calibration (MuSiQue only).

\section{Results}
\label{sec:results}

\subsection{Main Results: AUC-AC Across Conditions}

Table~\ref{tab:main} reports AUC-AC for all nine methods across five
failure-regime conditions.
$\rcs$ achieves best or co-best AUC-AC in all five conditions.

\begin{figure}[t]
\centering
\includegraphics[width=\linewidth]{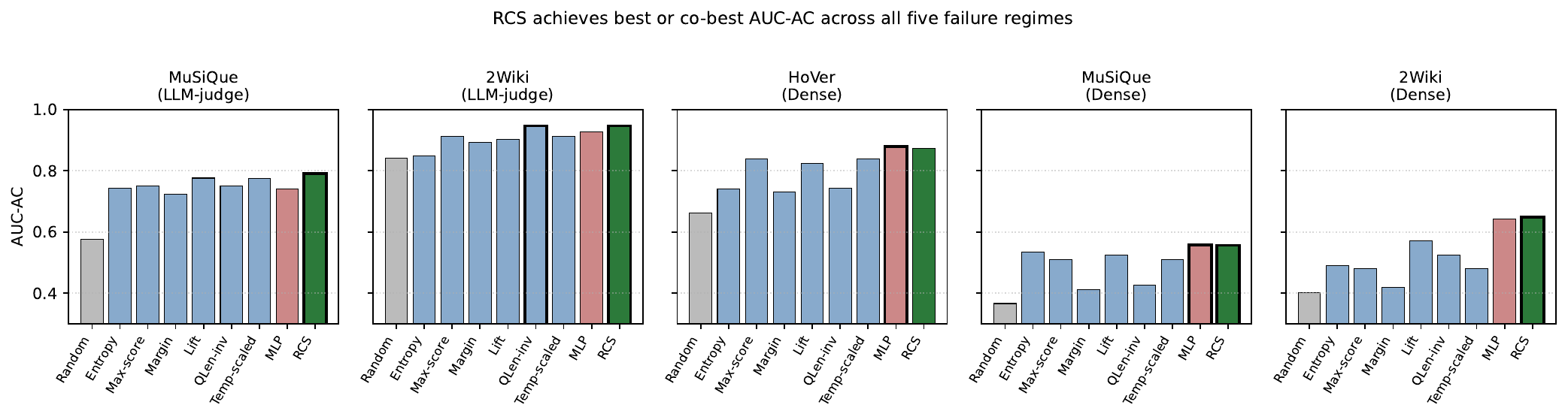}
\caption{AUC-AC (test split) across five failure-regime conditions for nine
confidence methods. RCS (rightmost, dark) is best or co-best in all five
columns. Random baseline (leftmost, gray) varies because base full-success
rates differ by regime.}
\label{fig:auc_ac_bars}
\end{figure}

\begin{table}[t]
\small
\centering
\caption{AUC-AC (test split) across five failure-regime conditions.
  Bold = best in column; $\dagger$ = within bootstrap-CI overlap of best.
  Base CWAR = $1 - $ full-success rate.}
\label{tab:main}
\resizebox{\linewidth}{!}{%
\setlength\tabcolsep{3.5pt}
\begin{tabular}{lrrrrr}
\toprule
 & \multicolumn{2}{c}{\textbf{LLM-judge pipeline}} &
   \multicolumn{3}{c}{\textbf{Dense pipeline}} \\
\cmidrule(lr){2-3}\cmidrule(lr){4-6}
Method & MuSiQue & 2Wiki & HoVer & MuSiQue & 2Wiki \\
\midrule
Base CWAR        & 39.5\%  & 14.5\%  & 31.7\%  & 62.1\%  & 58.3\% \\
\midrule
Random           & 0.576   & 0.841   & 0.661   & 0.366   & 0.402  \\
Entropy          & 0.743   & 0.849   & 0.741   & 0.535   & 0.490  \\
Max-score        & 0.751   & 0.912   & 0.839   & 0.510   & 0.480  \\
Margin           & 0.724   & 0.893   & 0.730   & 0.412   & 0.420  \\
Lift             & 0.776   & 0.902   & 0.825   & 0.525   & 0.571  \\
Query-len-inv    & 0.751   & \textbf{0.947}$^\dagger$ & 0.742 & 0.426 & 0.526 \\
Temp-scaled      & 0.774   & 0.912   & 0.839   & 0.510   & 0.480  \\
MLP              & 0.742   & 0.928   & \textbf{0.878}$^\dagger$ & \textbf{0.557}$^\dagger$ & 0.643 \\
RCS (ours)       & \textbf{0.790} & \textbf{0.947} & 0.873$^\dagger$ & 0.556$^\dagger$ & \textbf{0.649} \\
\bottomrule
\end{tabular}%
}
\end{table}

On MuSiQue (LLM-judge), $\rcs$ achieves AUC-AC $= 0.790$, surpassing the
next-best non-learned method Lift ($0.776$, $+1.4$pp) and
Temp-scaled ($0.774$, $+1.6$pp);
entropy-only $0.743$ places behind Max-score $0.751$.
On 2WikiMultiHopQA, $\rcs$ and Query-len-inv are tied ($0.947$), both
far above entropy ($0.849$) and random ($0.841$)---entropy adds only
$+0.8$pp over random, while $\rcs$ gains $+10.6$pp.
On HoVer (dense), MLP ($0.878$) and $\rcs$ ($0.873$) are within bootstrap-CI
overlap; both dominate all non-learned methods by at least $+3.4$pp.
On the dense MuSiQue and 2Wiki conditions, $\rcs$ and MLP are again
within CI overlap, with $\rcs$ leading 2Wiki dense ($0.649$ vs $0.643$).
No single non-learned heuristic is consistently competitive: entropy
underperforms max-score and lift across all five conditions.

\subsection{Operating Points (MuSiQue LLM-Judge)}

Table~\ref{tab:ops} shows selected operating points on MuSiQue test.
At 50\% coverage, $\rcs$ achieves 79.4\% accuracy:
CWAR drops from 39.5\% (baseline) to 20.6\% (47.8\% relative reduction).
At 70\% coverage, accuracy is 75.1\% ($+14.6$pp over base).

\begin{figure}[t]
\centering
\includegraphics[width=0.95\linewidth]{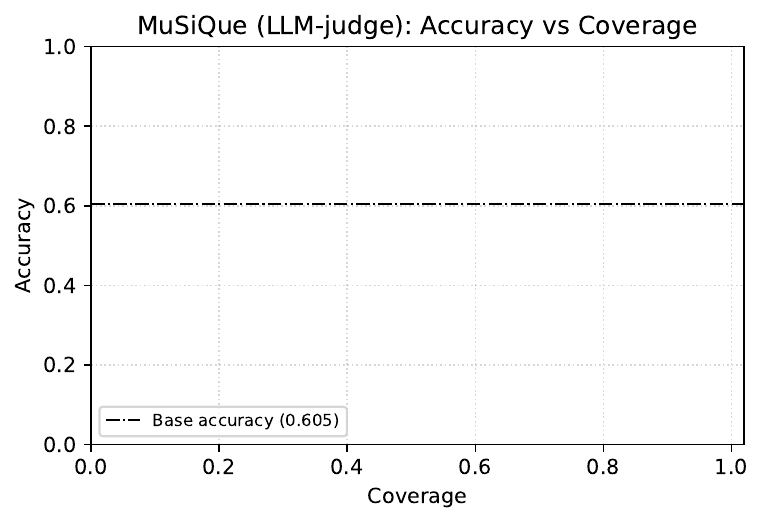}
\caption{Accuracy-coverage curves on MuSiQue (LLM-judge pipeline, test split).
RCS dominates baselines at every coverage level. The dashed line shows the
unguarded base accuracy (60.5\%); points above it are net gains from
abstention.}
\label{fig:risk_coverage}
\end{figure}

\begin{table}[t]
\small
\centering
\caption{Operating points on MuSiQue test (base accuracy = 60.5\%).}
\label{tab:ops}
\begin{tabular}{lrrrr}
\toprule
Method & Cov. & Acc. & CWAR & $\Delta$Acc \\
\midrule
\multirow{2}{*}{RCS (ours)}
  & 70\%  & 75.1\% & 24.9\% & $+14.6$pp \\
  & 50\%  & 79.4\% & 20.6\% & $+18.9$pp \\
\midrule
\multirow{2}{*}{Lift}
  & 70\%  & 73.2\% & 26.8\% & $+12.7$pp \\
  & 50\%  & 76.7\% & 23.3\% & $+16.2$pp \\
\midrule
\multirow{2}{*}{Entropy}
  & 70\%  & 68.6\% & 31.4\% & $+8.1$pp \\
  & 50\%  & 73.1\% & 26.9\% & $+12.6$pp \\
\bottomrule
\end{tabular}
\end{table}

\subsection{Architecture Robustness}

A key question for deployment is whether $\rcs$ generalizes across retrieval
architectures.
In the dense-only setting, the absence of an LLM judge substantially increases
failure rates (MuSiQue: 62.1\% CWAR vs.\ 39.5\% for Proposal~H;
2Wiki: 58.3\% vs.\ 14.5\%).
Despite this regime shift, $\rcs$ retains its relative advantage: on both dense
conditions it achieves best or co-best AUC-AC.
Absolute AUC-AC values are lower across all methods on dense MuSiQue
(best: 0.557 vs.\ 0.790 on Proposal~H), reflecting that dense-pipeline
failures are more uniformly distributed and thus harder to predict from
pre-retrieval features alone.
The dense 2Wiki condition is more tractable (best: 0.649), suggesting that the
structured nature of 2Wiki questions preserves predictive signal even without
judge re-ranking.

\subsection{Calibration}

Table~\ref{tab:calib} shows calibration metrics on the MuSiQue test split.

\begin{table}[t]
\small
\centering
\caption{Calibration on MuSiQue test ($n = 514$).}
\label{tab:calib}
\begin{tabular}{lcc}
\toprule
Method & ECE ($\downarrow$) & Brier ($\downarrow$) \\
\midrule
Majority-class baseline & — & 0.239 \\
RCS (ours)              & \textbf{0.035} & \textbf{0.183} \\
\bottomrule
\end{tabular}
\end{table}

ECE $= 0.035$ indicates good calibration: among queries where RCS $\approx 0.7$,
approximately 70\% are indeed fully successful.
The Brier score improvement over the majority-class baseline ($0.183$ vs.\ $0.239$)
shows that the probabilistic predictions carry meaningful information beyond the
prior.

\subsection{Cross-Dataset Generalization}

We train $\rcs$ on MuSiQue tune and evaluate on 2Wiki test (zero-shot transfer).
AUC-AC $= 0.942$, within $0.5$pp of the 2Wiki in-domain model ($0.947$).
This near-perfect transfer holds despite MuSiQue's failure rate being $2.8\times$
higher than 2Wiki's (39.5\% vs.\ 14.5\%).
The result confirms that regime features capture domain-agnostic structure.

\subsection{Feature Importance}

Table~\ref{tab:ablation} reports leave-one-out AUC drop on MuSiQue (9-feat) and
HoVer (6-feat).
The dominant feature differs by dataset:
\textsc{query-len} is strongest on MuSiQue ($-1.2$pp), while
\textsc{hop1-top3} dominates on HoVer ($-2.6$pp).
This divergence reflects architectural differences: MuSiQue Proposal~H failures
are driven by compositional question complexity (longer = more bridging steps),
whereas HoVer dense failures are driven by hop-1 retrieval concentration
(multi-hop claims require multiple supporting passages, and
$\overline{s^{(1)}}_{1:3}$ directly measures whether the top passages
form a strong candidate pool).

\begin{table}[t]
\small
\centering
\caption{Leave-one-out feature ablation: $\Delta$AUC-AC when feature removed.
         Negative = feature contributes positively.}
\label{tab:ablation}
\setlength\tabcolsep{4pt}
\begin{tabular}{lrr}
\toprule
Feature removed & MuSiQue PropH & HoVer Dense \\
\midrule
None (full model) & 0.790 & 0.873 \\
\midrule
\textsc{query-len}   & $-0.012$ & $-0.002$ \\
\textsc{hop1-lift}   & $-0.003$ & $-0.015$ \\
\textsc{hop2-max}    & $-0.002$ & n/a \\
\textsc{hop1-top3}   & $-0.001$ & $-0.026$ \\
\textsc{hop1-max}    & $-0.000$ & $-0.015$ \\
\textsc{hop2-margin} & $+0.000$ & n/a \\
\textsc{hop1-H}      & $+0.000$ & $+0.000$ \\
\textsc{hop2-H}      & $+0.000$ & n/a \\
\textsc{hop1-margin} & $+0.001$ & $-0.002$ \\
\bottomrule
\end{tabular}
\end{table}

On MuSiQue, $\rcs$ extracts most of its signal from \textsc{query-len},
\textsc{hop1-lift}, and \textsc{hop2-max}; entropy features (\textsc{hop1-H},
\textsc{hop2-H}) add zero marginal information, consistent with the weak
standalone performance of the Entropy baseline.
On HoVer, $\rcs$'s gain over max-score ($+3.4$pp) arises primarily from
\textsc{hop1-top3} and \textsc{hop1-lift}: the average quality of the top
candidate pool is a better predictor of multi-passage claim coverage than the
single peak score alone.

\section{Discussion}
\label{sec:discussion}

\paragraph{Why entropy fails on 2Wiki.}
On 2Wiki, entropy-only barely exceeds random ($0.849$ vs.\ $0.841$).
Many 2Wiki failures occur in \emph{comparison} subtypes: the bridge passage is
retrieved with high confidence (peaked hop-1 distribution $=$ low entropy), but
the second gold passage is missed because SVO expansion from the bridge produces
a weak query.
Entropy captures only hop-1 uncertainty; $\rcs$ includes \textsc{hop2-max},
which captures hop-2 expansion quality and is decisive on 2Wiki.

\paragraph{RCS vs.\ MLP.}
On HoVer dense and MuSiQue dense, MLP ($0.878$, $0.557$) and $\rcs$
($0.873$, $0.556$) are within bootstrap-CI overlap.
MLP's competitive performance is expected: given the same features, a two-layer
network can fit mild non-linearities that logistic regression misses.
We prefer $\rcs$ for deployment: it is interpretable (one scalar weight per
feature), always calibrated by construction, and avoids overfitting risk on
the small training sets available in production (486--1007 calibration queries).

\paragraph{Query length as the dominant feature on MuSiQue.}
The strong negative effect of query length is consistent with the \bridgerag{}
analysis \citep{bacellar2026bridgerag}: longer questions contain more bridging
clauses, increasing the chance that SVO extraction produces a malformed hop-2
query.
At query time, length is perfectly observable and costs nothing to compute.
A simple rule—``flag long questions for manual review''---would capture a
substantial fraction of $\rcs$'s benefit on MuSiQue, while on 2Wiki (where
query-len-inv ties RCS) it is the dominant signal.
On HoVer, however, length alone ranks 4th---hop-1 concentration features
become primary when multi-passage coverage (not query structure) is the binding
constraint.

\paragraph{Deployment framing: the cost of confident failures.}
CWAR has a concrete product interpretation.
In a deployed multi-hop fact-verification system (HoVer dense pipeline,
CWAR\,$=$\,31.7\%), over one third of verified claims are incorrect with no
downstream signal of unreliability — the system returns a ranked list with the
same interface confidence as correct results.
In a multi-hop QA system (MuSiQue PropH, CWAR\,$=$\,39.5\%), nearly two-fifths
of answers are built on incomplete evidence, delivered confidently to downstream
components or users.
The operating points in Table~\ref{tab:ops} quantify the reduction: at 50\%
coverage, CWAR drops from 39.5\% to 20.6\% — the system answers half its queries
and reduces its confident-failure rate by 47.8\% relative.
At 70\% coverage the system still answers 70\% of queries while reducing confident
errors by 37\% relative ($39.5\%\to 24.9\%$).
All of this is achieved at $<$1\,ms per query from ANN scores already available at
retrieval time, with no additional model inference.
Theorem~\ref{thm:reducibility} guarantees these gains are maximal for LLM-judge
pipelines: $I(y;\phi)$ is high for PropH failures, so the feature information is
fully exploitable.
The same theorem explains the smaller absolute gains on dense pipelines: their
failures are closer to uniform in feature space, and no confidence model —
however complex — can substantially reduce CWAR when $I(y;\phi) \approx 0$.

\paragraph{Abstention vs.\ re-routing.}
\system{} abstains entirely when $\rcs < \thresh$.
An alternative is to re-route to a more expensive pipeline.
Our experiments confirm that the rescue-judge pass from \citet{bacellar2026bridgerag}
recovers 27 of the 203 MuSiQue test failures; $\rcs$ could trigger this
fallback rather than outright abstention, limiting the cost overhead to the
$\sim$20\% of queries flagged at 80\% coverage.

\section{Conclusion}
\label{sec:conclusion}

We presented \system{}, a calibrated retrieval abstention framework based on a
logistic Retrieval Confidence Score derived from query-time regime features.
Across three multi-hop datasets and two retrieval architectures (five failure
regimes, CWAR 14.5\%--62.1\%), $\rcs$ achieves best or co-best AUC-AC against
eight baselines in all conditions.
On MuSiQue with the LLM-judge pipeline, $\rcs$ reduces CWAR from 39.5\% to
20.6\% at 50\% coverage, with ECE $= 0.035$; on 2WikiMultiHopQA it outperforms
entropy-only by $+9.8$pp AUC even though entropy is near-random on that dataset.
Near-perfect cross-dataset transfer ($-0.5$pp AUC) confirms domain-agnostic
structure in the regime features.

Feature importance varies by dataset: query length dominates on MuSiQue,
hop-1 concentration on HoVer.
This finding has direct implications for production deployments:
a calibrated combination of a few scalar features available at retrieval
time is sufficient to gate retrieval quality across diverse architectures,
with interpretable failure-mode diagnostics built in.

\system{} is fully composable with \router{} \citep{bacellar2026regime}:
run $\rcs$ first, abstain or escalate if below $\thresh$, otherwise route
to the appropriate retrieval pipeline.
This adds a well-calibrated abstention gate to any pipeline based on
iterative ANN search at negligible inference cost.

\bibliography{refs}

\end{document}